\documentclass[journal]{IEEEtran}

\usepackage{amsmath}         
\usepackage{amsthm}
\usepackage{amssymb}         
\usepackage{graphicx}        
\usepackage{mathptmx}        
\usepackage{times}           
\usepackage{comment}					
\newtheorem{lem}{Lemma}
\newtheorem{thm}{Theorem}

\newtheorem{corollary}{Corollary}

\newtheorem{definition}{Definition}
\newtheorem{exmp}{Example}

\def\<{\leqslant}           
\def\>{\geqslant}           


\def\wh{\widehat}
\def\wt{\widetilde}
\def\~{\wt{~}}
\def\Re{\mathrm{Re}}   
\def\Im{\mathrm{Im}}   

\def\cH{\mathcal{H}}     
\def\mA{\mathbb{A}}      
\def\mR{\mathbb{R}}      
\def\mC{\mathbb{C}}      
\def\fH{\mathfrak{H}}    
\def\rT{\mathrm{T}}        
\def\bS{\mathbf{S}}
\def\bE{\mathbf{E}}    
    %
\def\[[[{[\![\![}
\def\]]]{]\!]\!]}


\def\rd{{\rm d}}        



\def\bD{\mathbf{D}}

\def\bN{\mathbf{N}}
\def\bR{\mathbf{R}}

\def\bJ{\mathbf{J}}

\def\x{\times}
\def\ox{\otimes}

\def\fA{\mathfrak{A}}

\def\fF{\mathfrak{F}}

\def\fa{\mathfrak{a}}

\def\bH{{\mathbf H}}

\def\bX{{\bf X}}
\def\bM{{\bf M}}

\def\cF{\mathcal{F}}

\def\bJ{\mathbf{J}}

\def\mH{\mathbb{H}}
\def\mS{\mathbb{S}}

\def\diag{\mathop{\mathrm{diag}}}    


\DeclareMathAlphabet{\mathbfit}{OML}{cmm}{b}{it}
\DeclareMathAlphabet      {\mathbfit}{OML}{cmm}{b}{it}
\DeclareMathAlphabet      {\mathbfd}{OT1}{cmr}{bx}{n}

\begin{document}

\title{A Modified Frequency Domain Condition for \\the Physical Realizability of \\ Linear Quantum Stochastic Systems}

\author{Arash~Kh.~Sichani,~\IEEEmembership{Member,~IEEE,}		
        Ian~R.~Petersen,~\IEEEmembership{Fellow,~IEEE}       
        
\thanks{This work is supported by the Australian Research Council. The authors are with UNSW Canberra, ACT 2600, Australia. E-mail: {\tt arash\_kho@hotmail.com, i.r.petersen@gmail.com}.}
}

\maketitle

\begin{abstract}
This note is concerned with a modified version of the frequency domain physical realizability (PR) condition for linear quantum systems. We consider open quantum systems whose dynamic variables satisfy the canonical commutation relations of an open quantum harmonic oscillator and are governed by linear quantum stochastic differential equations (QSDEs). In order to correspond to physical quantum systems, these QSDEs must satisfy PR conditions. We provide a relatively simple proof that the PR condition is equivalent to the frequency domain $(J,J)$-unitarity of the input-output transfer function and orthogonality of the feedthrough matrix of the system without the technical spectral assumptions required in previous work. We also show that the poles and transmission zeros associated with the transfer function of PR linear quantum systems are the mirror reflections of each other about the imaginary axis. An example is provided to illustrate the results.
\end{abstract}

\begin{IEEEkeywords}
	Linear systems, stochastic systems, transfer functions.
\end{IEEEkeywords}

\section{Introduction}
Quantum stochastic differential equations (QSDEs) \cite{HP_1984,P_1992} provide a framework for the modelling and analysis of a wide range of open quantum systems. In QSDEs, the environment is modelled by external fields acting on a boson Fock space \cite{P_1992}. In particular, linear QSDEs represent the Heisenberg evolution of pairs of conjugate operators in a multi-mode open quantum harmonic oscillator (OQHO) which is coupled to external bosonic fields. For example, in quantum optics, the input-output dynamics of quantum-optical components, such as optical cavities, beam splitters and phase shifters, and their interconnections are often modelled by linear QSDEs \cite{P_1992,P_2010}, provided the latter are physically realizable (PR) as OQHOs \cite{GZ_2004}. The conditions for PR of linear QSDEs are organised as a set of constraints on the coefficients of the QSDEs \cite{JNP_2008} or, alternatively, on the quantum system transfer function in the frequency domain \cite{SP_2012}. 

In linear feedback control systems, it is the transfer function of the controller, and not the particular state-space realization of the controller, which determines the important specifications of the closed-loop system such as stability. Similarly, in coherent quantum feedback control problems, in which the controller is required to be PR (see for example \cite{JNP_2008}), it is important to have a condition for physically realizability on the controller transfer function. This condition can be used to give a unifying treatment for coherent quantum synthesis problems requiring stability and high performance in terms of $H_2$ and $H_\infty$ norms \cite{AKh_2015}. Moreover, the condition can be utilized to facilitate the application of frequency domain approaches to model approximation, reduction and system identification of linear quantum systems (see for example \cite{SP_2012, P_2010, MY_2016} and references therein). These applications motivate the study of PR conditions on quantum system transfer functions which are shown to be equivalent to a frequency domain $(J,J)$-unitary constraint and a unitary symplectic constraint on the direct feedthrough of the quantum system, under some technical assumptions, in \cite{SP_2012}. 

In the present note, we provide a relatively simple proof to a modified version of the results of \cite{SP_2012} which avoids the technical assumptions required in that paper. In view of these new results, associated coherent control problems for linear quantum systems can be addressed by purely frequency domain approaches. Indeed, removing the technical assumptions from the results of \cite{SP_2012} is important, because it makes the application of the result, for example, in coherent quantum control, simpler and more complete since the technical assumption does not need to be checked. Moreover, we provide a connection between the location and number of poles and transmission zeros associated with the transfer functions of PR linear quantum systems. In particular, we show that the transmission zeros of such transfer functions are the mirror reflections of its poles about the imaginary axis. Finally, we provide an example to illustrate the results. 

The rest of this paper is organised as follows. Section~\ref{sec:not} outlines the notation used in the paper. We provide a brief introduction to the OQHOs under consideration in Section~\ref{sec:system}. Section~\ref{sec:LQHO_PR} describes the PR condition for the quantum systems and provides some facts about the location of the poles and zeros of their transfer functions. Finally, we provide an example to illustrate the results of the paper. Some additional results required in the proofs are given in appendices~\ref{App:1_1_C} and \ref{App:CLD}.
\section{Notation}\label{sec:not}
Unless specified otherwise,  vectors are organized as columns, and the transpose $(\cdot)^{\rT}$ acts on matrices with operator-valued entries as if the latter were scalars. For a vector $X$ of self-adjoint operators $X_1, \ldots, X_r$ and a vector $Y$ of operators $Y_1, \ldots, Y_s$, the commutator matrix is defined as an $(r\x s)$-matrix
$
    [X,Y^{\rT}]
    :=
    XY^{\rT} - (YX^{\rT})^{\rT}
$
whose $(j,k)$th entry is the commutator
$
    [X_j,Y_k]
    :=
    X_jY_k - Y_kX_j
$ of the operators $X_j$ and $Y_k$. Furthermore, $(\cdot)^{\dagger}:= ((\cdot)^{\#})^{\rT}$ denotes the transpose of the entry-wise operator adjoint $(\cdot)^{\#}$. When it is applied to complex matrices,  $(\cdot)^{\dagger}$ reduces to the complex conjugate transpose  $(\cdot)^*:= (\overline{(\cdot)})^{\rT}$. 
$\Re M$ and $\Im M$ denote the extension of the real and imaginary part of a complex matrix to matrices $M$ with operator-valued entries as $\Re M = \frac{1}{2} (M+M^\#)$ and $\Im M = \frac{1}{2i} (M-M^\#)$ which consist of self-adjoint operators. The positive semi-definiteness of matrices is denoted by $\succcurlyeq$, and $\ox$ is the tensor product of spaces or operators (for example, the Kronecker product of matrices). Furthermore, $\mS_r$, $\mA_r$
 and
$
    \mH_r
    :=
    \mS_r + i \mA_r
$ denote
the subspaces of real symmetric, real antisymmetric and complex Hermitian  matrices of order $r$, respectively, with $i:= \sqrt{-1}$ the imaginary unit. Also, $I_r$ denotes the identity matrix of order $r$, 
$
J_r:=
{\scriptsize\begin{bmatrix}
     0 & 1\\
    -1 & 0
\end{bmatrix}}
\ox
I_{\frac{r}{2}}
$ and
$
 \bJ_{r}:=
{\scriptsize\begin{bmatrix}
     1 & 0\\
     0 & -1
\end{bmatrix}}
\ox
I_{\frac{r}{2}}     
$. 
The sets $O(2r) := \big\{\Sigma \in \mR^{2r\x 2r}: \Sigma^\rT \Sigma =I \big\}$ and $Sp(2r,\mR) := \big\{\Sigma \in \mR^{2r\x 2r}: \Sigma^\rT J_{2r} \Sigma =J_{2r} \big\}$ refer to the group of orthogonal matrices and the group of symplectic real matrices of order $2r$. Matrices of the form
${\scriptsize\begin{bmatrix}
    R_1 & R_2 \\
    \overline{R}_2 & \overline{R}_1
\end{bmatrix}}
$
are denoted by $\Delta(R_1,R_2)$.
The notation
$
    {\scriptsize\left[
    \begin{array}{c|c}
          A & B \\
          \hline
          C & D
    \end{array}
    \right]}
$
refers to a state-space realization of the corresponding transfer matrix $\Gamma(s) := C(sI-A)^{-1}B+D$ with a complex variable $s \in \mC$.
The conjugate system transfer function $(\Gamma(-\overline{s}))^*$ is written as $\Gamma^{\~}(s)$.

\section{Open Quantum Harmonic Oscillators} \label{sec:system}
We consider the joint evolution of an $n$-mode OQHO and external bosonic fields in the Heisenberg picture, represented by the linear QSDEs:
\vskip-5mm\begin{align}
\label{equ:tdomain_model:1}
    \rd
    X(t)
    &= A X(t) \rd t+
       B \rd W(t),\\
\label{equ:tdomain_model:2}
    \rd
    Y(t)
    &= C X(t)\rd t+
   			 D \rd W(t).
\end{align}\noindent
Here, the first QSDE governs the plant dynamics, while the second QSDE describes the dynamics of the output fields on the system-field composite Hilbert space $\cH \ox \cF$. The vector $X$ of dynamic variables  satisfies the canonical commutation relations (CCRs)
\vskip-2mm
\begin{equation*}
    [X, X^{\rT}] = 2i \Theta,
    \qquad
    X:=
    {\small\begin{bmatrix}
        X_1\\
        \vdots\\
        X_{2n}
    \end{bmatrix}}
\end{equation*}
with a non-singular CCR matrix $\Theta \in \mA_{2n}$. Also, $W$ is a $2m$-dimensional vector of quantum Wiener processes $W_{1}, \ldots, W_{2m}$, which are self-adjoint operators on a boson Fock space \cite{H_1991,P_1992}, modelling the external fields with the It\^{o} matrix $\Omega:=\big( \omega_{jk} \big)_{1\<j,k\<2m} \in \mH_{2m}$:
\begin{equation}
\label{WW}
    \rd W \rd W^{\rT}
    =
    \Omega \rd t.
\end{equation}  
The entries of $W$ are linear combinations of the field annihilation $\fA_1, \hdots, \fA_m$ and creation $\fA_1^\dagger, \hdots, \fA_m^\dagger$ operator processes \cite{HP_1984,P_1992}:
\begin{equation}
\label{QWP}
{\scriptsize
	W\!\!:=\!\!2	
	\begin{bmatrix}
		\Re  \fA\\
		\Im  \fA
	\end{bmatrix}		
	\!\!=\!\!
	T_{2m}
	\begin{bmatrix}
		\fA\\
		\fA^\#
	\end{bmatrix}\!\!,\qquad
	T_{2m}
	\!\!:=\!\!	
		\begin{bmatrix}
			1  & 1\\
			-i & i
		\end{bmatrix}	
		\ox
	I_m	
	.
}
\end{equation}
The field annihilation and creation operators are adapted to the Fock filtration with the quantum It\^{o} relations
\begin{equation*}	
	\small
	\rd 
	\begin{bmatrix}
		\fA\\
		\fA^\#
	\end{bmatrix}
	\rd
	\begin{bmatrix}
		\fA^\dagger 
		&
		\fA^\rT
	\end{bmatrix}	
	\!\!:=\!\!
	\begin{bmatrix}
		\rd \fA \rd \fA^\dagger	   &	\rd \fA \rd \fA^\rT\\
		\rd \fA^\# \rd \fA^\dagger &    \rd \fA^\# \rd \fA^\rT
	\end{bmatrix}
	\!\!=\!\!
	\bigg(
	\begin{bmatrix}
		1 & 0\\
		0 & 0
	\end{bmatrix}
	\ox I_m
	\bigg)
	\rd t.
\end{equation*}
Accordingly, the It\^{o} matrix $\Omega$ in (\ref{WW}) is described by
\begin{equation}
	\label{Omega}
	\small
	\Omega=
	\bigg( 
	\begin{bmatrix}
		1 & 1\\-i & i
	\end{bmatrix}
	\begin{bmatrix}
		1 & 0\\0 & 0 
	\end{bmatrix}
	\begin{bmatrix}
		1 & 1\\-i & i 
	\end{bmatrix}^*
	\bigg)
	\ox I_m
	=
	I_{2m}+iJ_{2m} \succeq 0.
\end{equation}
In what follows, the subscripts in $I_{2m}$ and $J_{2m}$ will often be omitted for brevity.
The matrices $A \in \mR^{2n\x 2n}$, $B \in \mR^{2n\x 2m}$, $C \in \mR^{2m\x 2n}$, $D \in \mR^{2m\x 2m}$ in (\ref{equ:tdomain_model:1}) and (\ref{equ:tdomain_model:2}) are given by
\vskip-3mm\begin{align}
    \label{equ:ABCD}
    {\small\begin{bmatrix}
        A & B\\
        C & D
    \end{bmatrix}}
    := 
    {\small\begin{bmatrix}
        2\Theta R - \frac{1}{2} BJB^{\rT} \Theta^{-1} & B\\
        -D J B^\rT \Theta^{-1} & D
    \end{bmatrix}}, \quad
	B:=2 \Theta M^\rT.
\end{align}\noindent
Also, the parameter $R$ is a real symmetric matrix of order $2n$ associated with the quadratic Hamiltonian $\frac{1}{2} X^\rT R X$ of the OQHO, the linear system-field coupling parameter $M\in \mR^{2m\times 2m}$ and, in view of a similar relation in (\ref{WW}) for the output fields, the feedthrough real matrix $D$ belongs to the subgroup of orthogonal symplectic matrices (the maximum compact subgroup of symplectic matrices) 
\begin{equation}
	\label{Spm}
	Sp(m) = O(2m) \cap Sp(2m, \mR).
\end{equation}
Note that there exists a one-to-one correspondence between the real-valued parameterization (\ref{equ:ABCD}) with independent parameters $D$, $M$, $R$, which will be referred to as the position-momentum form of OQHOs, and the complex-valued, but structured, parameterization, referred to as the annihilation-creation form of OQHOs \cite{P_2010}; see Appendix~\ref{App:1_1_C} for more details. In \cite{SP_2012}, use is made of the annihilation-creation form of OQHOs to address the PR conditions for quantum systems.
\section{Open Quantum Harmonic Oscillators in the Frequency Domain and Physical Realizability}\label{sec:LQHO_PR} 
The input-output map of the OQHO, governed by the linear QSDEs (\ref{equ:tdomain_model:1}) and (\ref{equ:tdomain_model:2}), is completely specified by a transfer function which is defined in the standard way as
    \vskip-2mm\begin{equation}
    \label{equ:hoc_freq}
    \Gamma(s) :=
    {\small\left[
    \begin{array}{c|c}
          A & B\\
          \hline
          C & D
    \end{array}
    \right]}=C(sI-A)^{-1}B+D,
    \end{equation}\vskip-1mm\noindent
where the matrices $A, B, C, D$ are parameterized by the triplet $(D,M,R)$ as in (\ref{equ:ABCD}) with a given CCR matrix $\Theta$. In view of the specific structure of this parameterization, not every linear system, or system transfer function (\ref{equ:hoc_freq}) with an arbitrary quadruple $(A,B,C,D)$, represents the dynamics of an OQHO. This fact is addressed in the form of PR conditions for the quadruple $(A,B,C,D)$ to represent such an oscillator; see \cite{JNP_2008} for more details. The notion of PR for a transfer function is defined as follows.
\begin{definition} \label{def:PR_TF}
	The transfer function $\Gamma(s)$ is said to be \emph{physically realizable} if $\Gamma(s)$ represents an 
	OQHO, that is, there exists a minimal state-space realization for $\Gamma(s)$ which can be parameterized 
	by a triplet $(D,M,R)$ as in (\ref{equ:ABCD}) for a given CCR matrix $\Theta$.
\end{definition}

Note that, in view of the results of Lemma~\ref{lem:ch_fact} in Appendix~\ref{App:Ch_dec}, invariance of transfer functions with respect to similarity transformations of their state-space realizations \cite{ZDG_1996} and Definition~\ref{def:PR_TF}, by a similar approach which will be used in (\ref{eq:TF_DMR}), it can be shown that $\Gamma(s)$ is also physically realizable if there exists a minimal state-space realization for $\Gamma(s)$ which can be parameterized by the triplet $(D,M,R)$ as in (\ref{equ:ABCD}) with any non-singular skew-symmetric matrix $\Theta$. The following theorem which is the main result of this paper provides a PR condition for transfer matrices of linear quantum systems, which can be considered as a modified version of Theorem~4 in \cite{SP_2012}.
\begin{thm}
\label{thm:PR_Freq}
A transfer function $\Gamma(s)$ is physically realizable if and only if
	\vskip-3mm\begin{equation}
		\label{equ:JJUnit}
		\Gamma^{\~}(s) J \Gamma(s)=J
	\end{equation}\vskip-1mm\noindent
for all $s\in \mC$, and the feedthrough matrix $D = \Gamma(\infty)$ is orthogonal. 
\end{thm}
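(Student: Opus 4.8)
The plan is to prove both implications through a state-space description of the constant $(J,J)$-unitarity in (\ref{equ:JJUnit}): for a minimal realization $(A,B,C,D)$ of $\Gamma$, the relation $\Gamma^{\~}(s)J\Gamma(s)=J$ for all $s\in\mC$ holds if and only if $D^{\rT}JD=J$ and there is a real matrix $P$ with $A^{\rT}P+PA=C^{\rT}JC$ and $PB=C^{\rT}JD$, where $P$ is, under minimality, uniquely determined and turns out to be nonsingular and skew-symmetric. This is the role of the auxiliary results in Appendix~\ref{App:CLD}; the reason the spectral hypotheses of \cite{SP_2012} become unnecessary is that $P$ is only ever required to be invertible and skew, never sign-definite, so the signature-free factorization of Lemma~\ref{lem:ch_fact} in Appendix~\ref{App:Ch_dec} applies unconditionally.

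For necessity, suppose $\Gamma$ is physically realizable, with a minimal realization parameterized as in (\ref{equ:ABCD}) by a triplet $(D,M,R)$ and a nonsingular $\Theta\in\mA_{2n}$. Since $D\in Sp(m)=O(2m)\cap Sp(2m,\mR)$ by the standing assumption accompanying (\ref{equ:ABCD})--(\ref{Spm}), the feedthrough $D=\Gamma(\infty)$ is orthogonal, which is the second assertion, and in particular $D^{\rT}JD=J$. A direct computation from (\ref{equ:ABCD}) gives the identities $A\Theta+\Theta A^{\rT}+BJB^{\rT}=0$ and $\Theta C^{\rT}+BJD^{\rT}=0$; conjugating the first by $\Theta^{-1}$ and using the second together with $D^{\rT}JD=J$, these become $A^{\rT}P+PA=C^{\rT}JC$ and $PB=C^{\rT}JD$ for $P:=\Theta^{-1}\in\mA_{2n}$. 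Substituting $\Gamma(s)=D+C(sI-A)^{-1}B$ and $\Gamma^{\~}(s)=D^{\rT}-B^{\rT}(sI+A^{\rT})^{-1}C^{\rT}$ into the left-hand side of (\ref{equ:JJUnit}), replacing $D^{\rT}JD$, $C^{\rT}JD$, $C^{\rT}JC$ by these identities, and applying the resolvent identity $(sI+A^{\rT})^{-1}(A^{\rT}P+PA)(sI-A)^{-1}=P(sI-A)^{-1}-(sI+A^{\rT})^{-1}P$, the four terms cancel in pairs and leave exactly $J$. This step is a short manipulation once the three identities are available.

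For sufficiency, assume (\ref{equ:JJUnit}) and that $D=\Gamma(\infty)$ is orthogonal. Letting $s\to\infty$ in (\ref{equ:JJUnit}) gives $D^{\rT}JD=J$, so $D\in Sp(m)$ and $D$ commutes with $J$. Fix a minimal realization $(A,B,C,D)$ of $\Gamma$; by the characterization above there is a nonsingular skew-symmetric $P$ with $A^{\rT}P+PA=C^{\rT}JC$ and $PB=C^{\rT}JD$. Set $\Theta:=P^{-1}\in\mA_{2n}$, $M:=-\tfrac12 B^{\rT}\Theta^{-1}$ and $R:=\tfrac12\Theta^{-1}\bigl(A+\tfrac12 BJB^{\rT}\Theta^{-1}\bigr)$. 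Then $B=2\Theta M^{\rT}$ automatically; the coupling relation $PB=C^{\rT}JD$ together with $D^{\rT}D=I$ and $D^{\rT}JD=J$ yields $C=-DJB^{\rT}\Theta^{-1}$; and $R=R^{\rT}$ is equivalent to $A^{\rT}\Theta^{-1}+\Theta^{-1}A+\Theta^{-1}BJB^{\rT}\Theta^{-1}=0$, which follows from $A^{\rT}P+PA=C^{\rT}JC$ once one notes that $C^{\rT}JC=-\Theta^{-1}BJB^{\rT}\Theta^{-1}$ (a consequence of $C=-DJB^{\rT}\Theta^{-1}$ and $D^{\rT}JD=J$). Hence $(A,B,C,D)$ has the form (\ref{equ:ABCD}) for the triplet $(D,M,R)$ and the nonsingular skew-symmetric $\Theta$, and by the remark following Definition~\ref{def:PR_TF} this suffices for $\Gamma$ to be physically realizable.

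The main obstacle is the sufficiency direction, and in particular showing that the matrix $P$ supplied by the state-space characterization is nonsingular and skew-symmetric under minimality alone, \emph{without} assuming that $A$ is Hurwitz or that $A$ has no eigenvalues on the imaginary axis, which is the kind of hypothesis invoked in \cite{SP_2012}. I expect nonsingularity to follow from a minimality/compression argument (controllability and observability precluding a nontrivial kernel of $P$), skew-symmetry from the uniqueness of $P$ combined with transposition of the defining equations, and the independence of the argument from the location of the eigenvalues of $A$ from the signature-free factorization recorded in Lemma~\ref{lem:ch_fact}.
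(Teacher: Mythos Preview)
Your necessity argument is essentially the same as the paper's: you identify $P=\Theta^{-1}$, verify $A^{\rT}P+PA=C^{\rT}JC$ and $PB=C^{\rT}JD$ from the PR relations, and then carry out the resolvent cancellation. That part is fine.

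The gap is in the sufficiency direction. You assume a state-space characterization of (\ref{equ:JJUnit}) --- that $\Gamma^{\~}J\Gamma=J$ is equivalent to $D^{\rT}JD=J$ together with the existence of a real $P$ solving $A^{\rT}P+PA=C^{\rT}JC$ and $PB=C^{\rT}JD$ --- and then concentrate on showing that this $P$ is nonsingular and skew-symmetric. But you never prove the ``only if'' half of that characterization, i.e.\ that such a $P$ exists at all, and this is precisely the step that normally consumes a spectral hypothesis. The equation $A^{\rT}P+PA=C^{\rT}JC$ is a Sylvester equation in $P$; when $\sigma(A)\cap(-\sigma(A))\neq\emptyset$ it need not be solvable for a prescribed right-hand side, and uniqueness (which you invoke to conclude $P=-P^{\rT}$) also fails. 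So your sketch ``nonsingularity from minimality, skew-symmetry from uniqueness'' presupposes the very spectral genericity you are trying to avoid. Appendix~\ref{App:CLD} does not help here: Lemma~\ref{lem:ch_fact} only factorizes a given nonsingular skew matrix; it does not produce $P$.

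The paper sidesteps this entirely by a different device. Rather than solving a Sylvester equation, it rewrites (\ref{equ:JJUnit}) as $\Gamma^{-1}(s)=-J\Gamma^{\~}(s)J$ and observes that both sides are \emph{minimal} realizations of the same transfer function (minimality of the inverse follows from minimality of $\Gamma$ and invertibility of $D$). The state-space isomorphism theorem then yields a \emph{unique invertible} similarity $F$ relating the two realizations, with no condition whatsoever on $\sigma(A)$. Uniqueness of $F$ (not of a Sylvester solution) immediately gives $F=-F^{\rT}$, since $-F^{\rT}$ satisfies the same intertwining relations. Thus existence, invertibility, and skew-symmetry of the coupling matrix are obtained in one stroke, and only then does Lemma~\ref{lem:ch_fact} enter, to pass from the particular $F^{-1}$ to an arbitrary CCR matrix $\Theta$. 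If you want to repair your argument, the cleanest route is to replace the unproved KYP-style characterization by this minimal-realization comparison of $\Gamma^{-1}$ and $-J\Gamma^{\~}J$.
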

\begin{proof}
	By assuming that (\ref{equ:JJUnit}) is satisfied for all $s\in\mC$, the feedthrough matrix $D$ inherits the symplectic property, that is $D \in Sp(2m,\mR)$, from the transfer function $\Gamma(s)$ by continuity. Then, since the feedthrough matrix $D \in O(2m)$, we have $D \in Sp(m)$, where $Sp(m)$ is given in (\ref{Spm}). Moreover, the inverse of $\Gamma(s)$ can be computed as 
	\begin{equation}
		\label{eq:pr1:PR}
		\Gamma^{-1}(s)=- J \Gamma^{\~}(s) J .
	\end{equation}
	Since $\Gamma(s)$ is a proper transfer function, there exists minimal state-space realization for $\Gamma(s)$. By considering  (\ref{equ:hoc_freq}) as a minimal realization of $\Gamma(s)$, a minimal realization for the inverse transfer function is given by
	\begin{equation*}
		\Gamma^{-1}(s) =
	    {\small\left[
	    \begin{array}{c|c}
	          A-BD^{-1}C & BD^{-1}\\
	          \hline
	          -D^{-1}C & D^{-1}
	    \end{array}
	    \right]},
	\end{equation*}
	(see \cite[proposition~4.1.5]{B90}). In view of (\ref{eq:pr1:PR}),
	\begin{align}
		\nonumber
		D^{-1} - D^{-1} C \big( s I - A + BD^{-1}C \big)^{-1} BD^{-1}
		=\\ \label{eq:pr2:PR}
		-J \big( D^\rT - B^{\rT} (s I+A^\rT)^{-1}C^\rT \big) J,
	\end{align}
	which is an equality between two minimal realizations of the same rational transfer function matrix. Then, there exists a unique real and invertible  matrix $F$, associated with a state-space similarity transformation, (see, for example, \cite[Theorem~3.17]{ZDG_1996}) such that  
	\begin{align}
		\label{eq:pr3:PR}
		J B^\rT F \!=\! -D^{-1}C, \ \ 
		F^{-1}C^\rT J \!=\! BD^{-1}, \ \ 
		-F^{-1}A^\rT F \!=\! A - BD^{-1}C.
	\end{align}
	By transposing and rearranging the equations in (\ref{eq:pr3:PR}), and using the fact that $D^\rT J D =J$, we see that $-F^\rT$ also satisfies these equations. Therefore, from the uniqueness of $F$, it follows that $F=-F^\rT$.  Moreover, it can be shown by inspection from these equations that
\begin{align}
	\label{eq:pr4:PR}
	 C &= -DJ B^\rT F,\\	
	\label{eq:pr6:PR}
	0 &= A^\rT F^\rT + F^\rT A + C^{\rT} J C,\\
	\label{eq:pr5:PR}
	0 &= AF^{-1} + F^{-1} A^\rT+BJ B^{\rT}.
\end{align}	
	Equation (\ref{eq:pr5:PR}) implies $A = 2F^{-1} \wh{R} - \frac{1}{2} B J B^\rT F$ for 
	\begin{equation}
		\label{Rhat}
		\wh{R} := \frac{1}{2} F \Big( AF^{-1}+\frac{1}{2} B J B^\rT \Big) F = \wh{R}^\rT.
	\end{equation}
	In view of the results of Lemma~\ref{lem:ch_fact} in Appendix~\ref{App:Ch_dec} and the fact that any non-singular skew-symmetric matrix, such as $F$, is necessarily of even order, there exists a non-singular matrix $\Sigma \in \mR^{2n \times 2n}$ such that $F^{-1} = \Sigma \Theta \Sigma^{\rT}$ for any given CCR matrix $\Theta \in \mA^{2n \times 2n}$. Then, the $(D,M,R)$ parameters for the transfer function $\Gamma(s)$ can be represented as
\begin{equation}
	\label{eq:TF_DMR}
	( D , -\frac{1}{2} B^\rT \Sigma^{-\rT} \Theta^{-1} , \Sigma^{\rT} \wh{R} \Sigma),
\end{equation}
where $\wh{R}$ is defined in (\ref{Rhat}). Hence, $\Gamma(s)$ is physically realizable.

Conversely, suppose the transfer function (\ref{equ:hoc_freq}) is physically realizable and hence there exists a triplet $(D,M,R)$ such that (\ref{equ:ABCD}) holds. We compute
\begin{align*}
	\Gamma^{\~}(s) J \Gamma(s) 
	&= 
	\big(
		D^\rT - B^\rT (s I+A^\rT)^{-1}C^\rT
	\big)	
	J
	\big(
		D+C(s I-A)^{-1}B
	\big)
	\\
	&= 
	D^\rT J D + D^\rT J C(s I - A)^{-1} B - B^{\rT}(s I + A^\rT)^{-1}C^\rT J D  \\
	&\quad - B^\rT(s I+A^\rT)^{-1}C^\rT J C(sI-A)^{-1} B.
\end{align*}
It can be shown by inspection that similar equations to (\ref{eq:pr4:PR}) and (\ref{eq:pr6:PR}) with $F=\Theta^{-1}$ are satisfied for the realization $(A,B,C,D)$. Then, by replacing $C^\rT J D$ with $F B$ and $C^\rT J C$ with $A^\rT F + F A$ and using $D^\rT J D=J$ we obtain
\begin{small}
\begin{align*}
	\Gamma^{\~}(s) J \Gamma(s) 
	&=
	D^\rT J D + D^\rT J C(s I - A)^{-1} B - B^{\rT}(s I + A^\rT)^{-1}C^\rT J D  \\
	&\quad - B^\rT(s I+A^\rT)^{-1}C^\rT J C(sI-A)^{-1} B\\
	&= 
	J  + B^\rT F^\rT (s I - A)^{-1} B 	- B^{\rT}(s I + A^\rT)^{-1}FB \\
	&\quad - B^\rT(s I+A^\rT)^{-1} \big( A^\rT F + F A  + sF-Fs \big)(sI-A)^{-1} B\\
	&= 
	J  + B^\rT F^\rT (s I - A)^{-1} B 	- B^{\rT}(s I + A^\rT)^{-1}FB \\
	&\quad - B^\rT F (sI-A)^{-1} B	+ B^\rT(s I+A^\rT)^{-1}F B\\
	&=J
\end{align*}
\end{small}\noindent
where use is made of the skew-symmetry of $F$. This implies that $\Gamma(s)$ satisfies (\ref{equ:JJUnit}) for all $s\in\mC$.
\end{proof}
A transfer function $\Gamma(s)$, satisfying the condition (\ref{equ:JJUnit}), is said to be $(J,J)$-unitary; see, for example, \cite{SP_2012} and references therein. Since we consider this property for invertible square transfer matrices, in view of the fact that $J^2=-I$, the $(J,J)$-unitarity is equivalent to its dual form \cite{AKh_2015}:
\vskip-2mm
\begin{equation*}
		\Gamma(s) J \Gamma^{\~}(s)=J.
\end{equation*}
	
	In view of the one-to-one correspondence described in Appendix~\ref{App:1_1_C}, the results in Theorem~\ref{thm:PR_Freq} imply the results in \cite[Theorem~4]{SP_2012}. In particular, in the annihilation-creation form of OQHOs a similar result to Theorem~\ref{thm:PR_Freq} can be derived by replacing the matrix $J$ with $\bJ$ and $\Gamma(s)$ with 
$\pmb{\Gamma}(s) :=
    {\scriptsize\left[
    \begin{array}{c|c}
          F & G \\
          \hline
          L & K
    \end{array}
    \right]}
$, where the quadruple $(F,G,L,K)$ are defined in (\ref{equ:FGLK}).  Also, $K = \pmb{\Gamma}(\infty)$ must be of the form $\Delta(S,0)$ in which $S$ is a unitary matrix. However, in comparison to \cite[Theorem~4]{SP_2012}, no additional technical assumptions are required in Theorem~\ref{thm:PR_Freq}. The technical assumption which is used in \cite{SP_2012} is referred to as spectral genericity of the linear quantum systems \cite{AKh_2015}; refer to Definition~\ref{def:gen} and the corresponding definition in the position-momentum form of OQHOs in Appendix~\ref{App:1_1_C}.

In what follows, the notion of transmission zeros will be used according to their standard definition in linear systems theory; see for example \cite{ZDG_1996}.
\begin{corollary}
	\label{cor:QHO_PZ}
	Consider an OQHO with associated transfer function $\Gamma(s)$. The transmission zeros of $\Gamma(s)$ are the mirror reflection about the imaginary axis of its poles. 
\end{corollary}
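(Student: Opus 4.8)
\section*{Proof proposal for Corollary~\ref{cor:QHO_PZ}}

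The plan is to read off the pole--zero symmetry directly from the $(J,J)$-unitarity established in Theorem~\ref{thm:PR_Freq}. First I would note that, since $D=\Gamma(\infty)$ is orthogonal, $\det D\neq0$, so $\Gamma(s)$ is a square rational matrix with $\det\Gamma(s)\not\equiv0$; hence its transmission zeros are well defined and coincide, with their McMillan multiplicities, with the poles of the inverse transfer function $\Gamma^{-1}(s)$ (the Smith--McMillan form of $\Gamma^{-1}$ is obtained from that of $\Gamma$ by reversing and inverting the invariant factors, so the zeros of $\Gamma$ become the poles of $\Gamma^{-1}$).

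Next I would use the identity $\Gamma^{-1}(s)=-J\,\Gamma^{\~}(s)\,J$ from~(\ref{eq:pr1:PR}). Because the realization $(A,B,C,D)$ has real entries, $\Gamma^{\~}(s)=(\Gamma(-\overline s))^{*}=\Gamma(-s)^{\rT}$, a state-space realization of which has state matrix $-A^{\rT}$; multiplying on both sides by the constant invertible matrix $J$ does not alter the state matrix, so $\Gamma^{-1}(s)$ admits a realization with state matrix $-A^{\rT}$, and this realization is minimal whenever $(A,B,C,D)$ is minimal (transposition interchanges reachability and observability, and left/right multiplication of the input and output matrices by $J$ preserves both). Consequently the poles of $\Gamma^{-1}(s)$ are exactly the eigenvalues of $-A^{\rT}$, that is, the numbers $-\lambda$ where $\lambda$ ranges over the eigenvalues of $A$; and, by minimality of $(A,B,C,D)$, the eigenvalues of $A$ are precisely the poles of $\Gamma(s)$, counted with multiplicity.

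Finally, I would invoke the reality of $A$: its characteristic polynomial has real coefficients, so its spectrum is symmetric about the real axis, and therefore $\{-\lambda:\lambda\in\mathrm{spec}(A)\}=\{-\overline\lambda:\lambda\in\mathrm{spec}(A)\}$ as multisets. Since a point $p=x+iy\in\mC$ has $-\overline p=-x+iy$ for its mirror image about the imaginary axis, the multiset of transmission zeros $\{-p:p\ \text{a pole of }\Gamma\}$ is exactly the mirror reflection, about the imaginary axis, of the multiset of poles of $\Gamma(s)$ --- in particular the two sets have the same cardinality --- which is the claim. The only delicate point in carrying this out is the multiplicity bookkeeping, i.e.\ verifying that the displayed realization of $\Gamma^{-1}(s)$ is minimal so that pole and zero counts may be read off from $-A^{\rT}$; once that is in place the rest is a direct substitution using~(\ref{eq:pr1:PR}).
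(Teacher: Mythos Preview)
Your argument is correct and essentially coincides with the paper's proof: both read off the zero locations from the identity $\Gamma^{-1}(s)=-J\,\Gamma^{\~}(s)\,J$ and conclude that the transmission zeros lie at $\sigma(-A^{\rT})$, then use the reality of $A$ to interpret this as reflection about the imaginary axis. The only cosmetic difference is that the paper quotes the explicit state-space similarity $-F^{-1}A^{\rT}F=A-BD^{-1}C$ from~(\ref{eq:pr3:PR}) to pass from $-A^{\rT}$ to the standard zero matrix $A-BD^{-1}C$, whereas you obtain the same conclusion by arguing directly that the realization of $\Gamma^{-1}$ with state matrix $-A^{\rT}$ is minimal; your route has the small advantage of making the multiplicity bookkeeping via the Smith--McMillan form explicit.
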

\begin{proof}
In view of the results of Theorem~\ref{thm:PR_Freq}, as shown in (\ref{eq:pr3:PR}), the existence of a non-singular $F \in \mA_{2n}$ such that
\begin{equation*}
	-F^{-1}A^\rT F \!=\! A - BD^{-1}C
\end{equation*}  
implies that the spectrum $\sigma\big(\!\!-A^\rT \big)$ coincides with the spectrum $\sigma\big(A - BD^{-1}C \big)$ where the former coincides with the mirror reflection about the imaginary axis of the poles (the eigenvalues of the real matrix $A$) and the latter coincides with the transmission zeros of the transfer function $\Gamma(s)$ \cite{ZDG_1996}. 
\end{proof}
\section{Illustrative Example}\label{sec:exmpl}
\begin{exmp}
Consider a transfer matrix 
\begin{small}
\begin{equation*}
\Gamma(s) = \diag \bigg(\frac{s+1}{s},\frac{s-1}{s+1},\frac{s}{s-1},\frac{s-1}{s+1}\bigg)
\end{equation*}
\end{small}\noindent
which satisfies the conditions of Theorem~\ref{thm:PR_Freq}, that is, $\Gamma^{\~}(s) J \Gamma(s) = J$ for all $s \in \mC$ and $\Gamma(\infty) \in O(4)$. Then the transfer function $\Gamma(s)$ represents an OQHO.
The parameters $D$, $M$, $R$ for the associated OQHO with $\Theta = J$ are given by
\begin{small}
\begin{align*}
	D = I,
	\quad
	R=
	\begin{bmatrix}
		0 & 0 & \frac{1}{4} & 0 \\	0 & 0 & 0 & 0\\ \frac{1}{4} & 0 & 0 & 0\\ 0 & 0 & 0 & 0
	\end{bmatrix}\!\!\!,
	\quad	
	M=
	\begin{bmatrix}	
		-\frac{1}{2} & 0 & 0 & 0\\0 & 0 & 0 & 1\\0 & 0 & \frac{1}{4}  & 0\\ 0 & -\frac{1}{2} & 0 & 0
	\end{bmatrix}\!\!\!.			
\end{align*}	
Also, in view of the one-to-one correspondence between OQHOs in the position-momentum form and OQHOs in the annihilation-creation form, the complex-valued parameters $\bS$ ,$\bH$, $\bN$ with $\pmb{\Theta}= \bJ$ are given by
\begin{align*}
	\bS = I,\!\!
	\quad
	\bH =
	\begin{bmatrix}	
		0 & 0 & \frac{i}{2} & 0 \\	0 & 0 & 0 & 0\\ -\frac{i}{2} & 0 & 0 & 0\\ 0 & 0 & 0 & 0
	\end{bmatrix}\!\!\!,	\!\!\!\!	
	\quad
	\bN=
	\begin{bmatrix}	
		  0 & 0 & i & 0\\0 & -\frac{3}{2} & 0 & \frac{1}{2}\\ - i & 0 & 0 & 0 \\ 0 & \frac{1}{2} & 0 & -\frac{3}{2}
	\end{bmatrix}\!\!\!.
\end{align*} 
\end{small}
\vskip-2mm
The corresponding transfer matrix in the annihilation-creation form is
\begin{small}
\begin{equation*}
	\pmb{\Gamma}(s) =
	\begin{bmatrix}
		\frac{s^2-\frac{1}{2}}{s(s-1)} & 0 & \frac{-1}{2s(s-1)} & 0\\
 		0 & \frac{s-1}{s+1} & 0 & 0 \\
 		\frac{-1}{2s(s-1)} & 0 & \frac{s^2-\frac{1}{2}}{s(s-1)} & 0\\
		0 & 0 & 0 & \frac{s-1}{s+1}
	\end{bmatrix},
\end{equation*} 
\end{small}
and its associated McMillan form \cite{ZDG_1996} is
\begin{small}
\begin{equation*}
	\bM(s)=\diag \bigg(\frac{1}{s^3-s},\frac{1}{s+1},s-1,s^3-s\bigg).
\end{equation*}
\end{small}
The poles of $\Gamma(s)$ (and $\pmb{\Gamma}(s)$) are $(0,-1,-1,1)$, and hence, according to Definition~\ref{def:gen}, there exists no spectrally generic realization for $\Gamma(s)$ (or $\pmb{\Gamma}(s)$). Therefore, the results of \cite{SP_2012} cannot be applied to this example. The transmission zeros of $\Gamma(s)$ are $(0,1,1,-1)$, which are the mirror reflection about the imaginary axis of the poles.
\end{exmp}
\section*{Acknowledgment}
	AKhS is grateful to Dr. Igor G. Vladimirov for useful discussions and comments on this work.
\section{Conclusion}
We have shown that the PR condition is equivalent to a $(J,J)$-unitarity constraint on the quantum system transfer function and an orthogonality constraint on the constant feedthrough of the system. The technical assumption on existence of a spectrally generic realization of the transfer function associated with OQHOs used in the previous results has been shown to be redundant and a relatively simple proof has been provided to validate the modified results. We have also shown that the poles and transmission zeros, associated with the transfer functions of linear quantum systems, are the mirror reflection about the imaginary axis of each other. 
\appendix
\renewcommand{\theequation}{\Alph{subsection}\arabic{equation}}
\vskip-1mm
\subsection{One-to-One Correspondence Between Annihilation-Creation and Position-Momentum Forms of Open Quantum Harmonic Oscillators} \label{App:1_1_C}
\setcounter{equation}{0}
In order to make a connection between the results of Section~\ref{sec:LQHO_PR} and the results of \cite{SP_2012}, this section provides a one-to-one correspondence between the annihilation-creation and position-momentum forms of OQHOs.

Corresponding to a model of $n$ independent OQHOs is a vector $\fa$ of annihilation operators $\fa_1, \ldots, \fa_n$ on Hilbert spaces $\cH_1, \ldots, \cH_n$. The adjoint $\fa_j^\dagger$ of the operator $\fa_j$ is referred to as the creation operator. The doubled-up vector $\breve{\fa}$ of the annihilation and creation operators satisfies the CCRs \cite{M_1998}
\begin{equation}
\label{aaa}
    [
        \breve{\fa},
        \breve{\fa}^{\dagger}
    ]
    \!\!:=\!\!
    {\small\begin{bmatrix}
        [\fa,\fa^{\dagger}] & [\fa,\fa^{\rT}]\\
        [\fa^{\#},\fa^{\dagger}] & [\fa^{\#},\fa^{\rT}]
    \end{bmatrix}}
    \!\!= \!\!
    \bJ_{2n}
    ,\quad   
    \breve{\fa}
    \!\!:= \!\!
    {\small\begin{bmatrix}
        \fa\\ \fa^{\#}\end{bmatrix}}.
\end{equation}\noindent
We consider a linear quantum system whose dynamic variables are linear combinations of the annihilation and creation operators, acting on the tensor product space $\cH:= \cH_1\ox \ldots \ox \cH_n$:
\begin{equation}
\label{EEa}
    a
    :=
    \bE_1 \fa+ \bE_2 \fa^{\# }
    =
    \begin{bmatrix}
        \bE_1 & \bE_2
    \end{bmatrix}
    \breve{\fa},
\end{equation}
where $\bE_1$ and $\bE_2$ are appropriately dimensioned complex matrices.
The relations (\ref{aaa}) and (\ref{EEa}) imply that
\vskip-2mm$$
    [\breve{a}, \breve{a}^{\dagger}]
    =
    \bE
    [\breve{\fa}, \breve{\fa}^{\dagger}]
    \bE^*
    =
    \bE \bJ_{2n} \bE^*
    =:
    \pmb{\Theta},
$$\noindent
where $\bE:= \Delta(\bE_1,\bE_2) \in \mC^{2n \times 2n}$ is a non-singular matrix in accordance with the doubled-up notation, and the complex Hermitian matrix $\pmb{\Theta}$ of order $2n$ is the (generalized) CCR  matrix \cite{SP_2012}.
Now, consider an $n$-mode OQHO interacting with an external bosonic field defined on a Fock space \cite{P_1992}. The oscillator is assumed to be coupled to $m$ independent external input bosonic fields acting on the tensor product space $\cF:=\cF_1 \ox\ldots \ox\cF_m$, where $\cF_j$ denotes the Fock space associated with the $j$th input channel. The field annihilation operators $\fA_1(t), \ldots, \fA_m(t)$, which act on $\cF$, form a vector $\fA(t)$. Their adjoints  $\fA_1^{\dagger}(t), \ldots, \fA_m^{\dagger}(t)$, that is, the field creation operators, comprise a vector $\fA^\#(t)$. The field annihilation and creation operators are adapted to the Fock filtration and satisfy the It\^{o} relations
$
	\scriptsize
    \rd
    \breve{\fA}(t)
    \rd
    \breve{\fA}^{\dagger}(t)
    =
    {\begin{bmatrix}
        I_m & 0\\
        0 & 0
    \end{bmatrix}}
    \rd t
$ in terms of the corresponding doubled-up vector 
$\scriptsize \breve{\fA}(t):=   
	{\begin{bmatrix}
        \fA(t)\\
        \fA^{\#}(t)
    \end{bmatrix}}$. 
The linear QSDEs, derived from the joint evolution of the $n$-mode OQHO and the external bosonic fields in the Heisenberg picture, can be represented in the following form \cite{P_2010,SP_2012}:
\vskip-5mm\begin{align}
\label{equ:tdomain_A_model:1}
    \rd
    \breve{a}(t)
    &= F \breve{a}(t) \rd t+
       G \rd \breve{\fA}(t),\\
\label{equ:tdomain_A_model:2}
    \rd
    \breve{\fA}_{\rm out}(t)
    &= L \breve{a}(t)\rd t+
   	   K \rd \breve{\fA}(t).
\end{align}\noindent
Here, the first QSDE governs the plant dynamics, while the second QSDE describes the dynamics of the output fields in terms of the corresponding doubled-up vector
$
\scriptsize
\breve{\fA}_{\rm out}(t)
    :=
    { \begin{bmatrix}
        \fA_{\rm out}(t)\\
        \fA_{\rm out}^{\#}(t)
    \end{bmatrix}}
$
of annihilation and creation operators acting on the system-field composite   space $\fH\ox \fF$. Also, the matrices $F \in \mC^{2n\x 2n}$, $G \in \mC^{2n\x 2m}$, $L \in \mC^{2m\x 2n}$, $K \in \mC^{2m\x 2m}$ in (\ref{equ:tdomain_model:1}) and (\ref{equ:tdomain_model:2}) are given by
\vskip-3mm\begin{align}
    \label{equ:FGLK}
    {\small\begin{bmatrix}
        F & G\\
        L & K
    \end{bmatrix}}
    := 
    {\small\begin{bmatrix}
        -i\pmb{\Theta} \bH - \frac{1}{2} \pmb{\Theta} \bN^*\bJ_{2m} \bN & -\pmb{\Theta} \bN^* \bJ_{2m} 
        \Delta(\bS,0)\\
        \bN & \Delta(\bS,0)
    \end{bmatrix}},
%
\end{align}\noindent
where $\bH=\bH^*=\Delta(\bH_1,\bH_2) \in \mC^{2n \times 2n}$ is a Hermitian matrix which parameterizes the system Hamiltonian operator $\frac{1}{2}\breve{a}^{\dagger} \bH \breve{a}$, the matrix $\bN:=\Delta(\bN_1,\bN_2) \in \mC^{2m \times 2n}$ specifies the system-field coupling operators, and $\bS \in \mC^{m \times m}$ is the unitary scattering matrix. 

Similarly to (\ref{QWP}) and (\ref{Omega}), we define
{\small
\begin{align}
	\label{QXYP}
	X \!\!  :=\!\! 2
	\begin{bmatrix}
		\Re  a\\
		\Im  a
	\end{bmatrix}
	=
	T_{2n}\breve{a}
	,\qquad
	Y \!\! :=\!\! 2
	\begin{bmatrix}
		\Re  \fA_{\rm out}\\
		\Im  \fA_{\rm out}
	\end{bmatrix}
	= T_{2m} \breve{\fA}_{\rm out}
\end{align}
}\noindent
which provides a one-to-one correspondence between the OQHOs in the annihilation-creation form, parameterized by the matrices $\bS$, $\bN$, $\bH$ in (\ref{equ:tdomain_A_model:1}), (\ref{equ:tdomain_A_model:2}), and the OQHOs in the position-momentum form, parameterized by the matrices $D$, $M$, $R$ in (\ref{equ:tdomain_model:1}), (\ref{equ:tdomain_model:2}):
\begin{align}
	\label{DS}
	D &= 
		\nabla(\bS,0),
	\\ \label{MS}
	M &= -\frac{1}{2} \nabla^\rT(\bS,0) J_{2m} \nabla(\bN_1,\bN_2),
	\\ \label{RH}
	R
	&=
	\frac{1}{2} \nabla(\bH_1,\bH_2),
	\\ \label{tT}
	\Theta
	&=
	\nabla(\bE_1,\bE_2) J_{2n} \nabla(\bE_1,\bE_2)^\rT,
\end{align}
where we define the real matrix-valued function $\nabla(\bX_1,\bX_2) \in \mR^{2k \x 2j}$ for given matrices $\bX_\ell \in \mC^{k \times j}$ (such as $\bN_\ell$, $\bH_\ell$, $\bE_\ell$) for $\ell=1,2$ as 
\begin{align}
	\label{NbLa}
	\scriptsize
	\nabla(\bX_1,\bX_2) 
	:= 
	\frac{1}{2} T_{2k} \Delta(\bX_1,\bX_2) T^*_{2j} 
	=
	\begin{bmatrix}
		\Re (\bX_1 +  \bX_2)	&   -\Im (\bX_1 - \bX_2)\\
		\Im (\bX_1 +  \bX_2)   &	 \Re (\bX_1 - \bX_2)
	\end{bmatrix}.
\end{align}
Also, use is made of $T_k T_k^*=T_k^*T_k=2I_k$ and $\frac{1}{2} T_{2k} \bJ_{2k} T_{2k}^* = iJ_{2k}$ in (\ref{DS})--(\ref{tT}).
It follows from (\ref{DS}), (\ref{HR}), (\ref{NbLa}) and the Hermitian property of $\bH$ that
\begin{equation*}	
	D^\rT(I_{2m}+iJ_{2m})D = I_{2m}+iJ_{2m},\qquad R=R^\rT.
\end{equation*}
Conversely, for given parameters $(D,M,R)$ of OQHOs in the position-momentum form
\begin{align}
	\label{SD} 
	K 
	&= 
	\Delta(
		\bD_1
	   	,
	   	\bD_2
	   	)
	= 
	\Delta(
		\bD_1
	   	,
	   	0),
	\\	
	\label{LM}
	\bN 	
	&= 
	-2i\Delta(\bD_1,0) \bJ_{2m} \Delta(\bM_1,\bM_2),
	\\ \label{HR}
	\bH &= 2 \Delta(
	   			\bR_1
	   			,
	   			\bR_2
	   			),	
	\\ \label{bThTh}
	\pmb{\Theta}
	  &= \Delta(
	   			\bE_1
	   			,
	   			\bE_2
	   	 )
	   	 \bJ_{2n}
	   	 \Delta(
	   			\bE_1
	   			,
	   			\bE_2
	   	 )^*
	   	 ,	   	 		   			   			
\end{align}
where we partition $(2j\x 2k)$-matrices $X$ (such as $D$, $M$, $R$, $E$) into $(j\x k)$-blocks as
\begin{equation*}	
    X
    :=
     \begin{bmatrix}
                        X_{11} & X_{12}\\
                        X_{21} & X_{22}
    \end{bmatrix},    
\end{equation*}
and $\bX_1$ and $\bX_2$ are defined as
\begin{align*}
	\bX_1&:= \frac{1}{2}(X_{11}+X_{22})+\frac{i}{2}(X_{21}-X_{12}),\\
	\bX_2&:= \frac{1}{2}(X_{11}-X_{22})+\frac{i}{2}(X_{21}+X_{12})
\end{align*}
and $E$ can be computed from a Cholesky-like factorization as $\Theta=EJ_{2n}E^\rT$; refer to Appendix~\ref{App:Ch_dec}.
Also, use is made of the fact that $D \in Sp(m)$, where $Sp(m)$ is defined in (\ref{Spm}), in (\ref{SD}) which implies $\bD_1^* \bD_1=\bD_1 \bD_1^*=I$ and $\bD_2^* \bD_2=0$. Then, $\bS= \bD_1$ is a unitary matrix and $\bD_2=0$. It follows from the symmetric property of $R$ and non-singularity of $E$ that $\bH$ and $\pmb{\Theta}$, defined in (\ref{HR}), (\ref{bThTh}), are Hermitian matrices and $\pmb{\Theta}$ is a non-singular matrix. It can be seen by inspection that the matrix $\bN$ in (\ref{LM}) is structured as $\Delta(\bN_1,\bN_2)$.

For the purposes of Section~\ref{sec:LQHO_PR}, the notion of specteral genericity is provided in the following definition.
\begin{definition} \cite{AKh_2015}
\label{def:gen}
The matrix $F$ and the state-space realization (\ref{equ:FGLK}) are said to be \emph{spectrally generic} if  the spectrum $\sigma(F)$ has no intersection with its mirror reflection
about the imaginary axis in the complex plane:
$
    \sigma(F)\bigcap \left(-\overline{\sigma(F)} \right) = \emptyset
$,
that is, $\lambda+\overline{\nu} \ne 0$ for all eigenvalues $\lambda,\nu\in \sigma(F)$.
\end{definition}
In view of the one-to-one correspondence described in this section, the matrix $F$, defined in (\ref{equ:FGLK}), is related to the matrix $A$, defined in (\ref{equ:ABCD}), by a similarity transformation. Hence, in the position-momentum form, spectral genericity is equivalent to the condition in which the spectrum $\sigma(A)$, which includes the poles of the associated transfer function, has no intersection with its mirror reflection about the origin of the complex plane.
\subsection{Cholesky-like Factorizations for Skew-Symmetric Matrices} 	\label{App:CLD}
\setcounter{equation}{0}
\label{App:Ch_dec}
 For the purposes of Section~\ref{sec:LQHO_PR}, the existence of Cholesky-like factorizations is addressed in the following lemma.
\begin{lem}
	\label{lem:ch_fact}
	Consider a non-singular matrix $\Theta \in \mA_{2n}$. There exists a non-singular matrix $\Sigma \in \mR^{2n \times 2n}$ such that $\Theta = \Sigma J_{2n} \Sigma^\rT$. 
\end{lem}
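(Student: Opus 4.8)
The plan is to prove Lemma~\ref{lem:ch_fact} by reducing the nonsingular skew-symmetric matrix $\Theta \in \mA_{2n}$ to its canonical form and then exhibiting the transformation $\Sigma$ explicitly. First I would invoke the standard fact from linear algebra that any real skew-symmetric matrix can be brought to a block-diagonal normal form under orthogonal congruence: there exists $U \in O(2n)$ such that $U^\rT \Theta U = \blockdiag(\theta_1 J_2, \ldots, \theta_n J_2)$ with real scalars $\theta_k$, where $J_2 = {\scriptsize\begin{bmatrix} 0 & 1 \\ -1 & 0\end{bmatrix}}$. Nonsingularity of $\Theta$ forces every $\theta_k \ne 0$. (Equivalently, one can argue this by induction on $n$: pick any nonzero vector, use skew-symmetry to produce a $2$-dimensional invariant subspace on which $\Theta$ acts as a nonzero multiple of $J_2$, pass to the orthogonal complement, and repeat.)

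\medskip\noindent
Next I would absorb the scalars $\theta_k$ by a diagonal scaling. Writing $\Lambda := \blockdiag(\theta_1 J_2,\ldots,\theta_n J_2)$ and $P := \blockdiag(|\theta_1|^{1/2} I_2, \ldots, |\theta_n|^{1/2} I_2)$, one has $P \Lambda P = \blockdiag(\theta_1|\theta_1| J_2, \ldots)$; more carefully, choosing $P$ diagonal so that $P^{-1}\Lambda P^{-\rT} = \blockdiag(\mathrm{sgn}(\theta_1) J_2, \ldots, \mathrm{sgn}(\theta_n) J_2)$ reduces $\Theta$ to a signed direct sum of $J_2$ blocks. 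It then remains to kill the signs: since ${\scriptsize\begin{bmatrix} 0 & 1 \\ 1 & 0\end{bmatrix}} J_2 {\scriptsize\begin{bmatrix} 0 & 1 \\ 1 & 0\end{bmatrix}}^\rT = -J_2$, conjugating the blocks with $\mathrm{sgn}(\theta_k)=-1$ by the swap matrix turns every block into $+J_2$. After permuting the $2\times 2$ blocks into the Kronecker pattern that matches the definition $J_{2n} = {\scriptsize\begin{bmatrix} 0 & 1 \\ -1 & 0\end{bmatrix}} \ox I_n$ used in this paper, we obtain an explicit product $V$ of the orthogonal transformation, the diagonal scaling, the block swaps, and the permutation such that $V^\rT \Theta V = J_{2n}$. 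Setting $\Sigma := V^{-\rT}$ (equivalently $\Sigma := (V^\rT)^{-1}$), which is real and nonsingular because each factor is, gives $\Sigma J_{2n} \Sigma^\rT = V^{-\rT} J_{2n} V^{-1} = \Theta$, as required.

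\medskip\noindent
The main obstacle is essentially bookkeeping rather than a conceptual difficulty: the canonical form of a skew-symmetric matrix delivers blocks in a "direct sum" layout $\blockdiag(J_2,\ldots,J_2)$, whereas the paper's $J_{2n}$ is laid out in the "interleaved" Kronecker form ${\scriptsize\begin{bmatrix} 0 & 1 \\ -1 & 0\end{bmatrix}}\ox I_n$; one must be careful that these two are orthogonally similar via a fixed permutation and track that this permutation (together with the sign-fixing swaps) is incorporated into $\Sigma$. Also worth stating explicitly is why the scaling step is legitimate — it uses precisely that $\Theta$ is nonsingular, so no $\theta_k$ vanishes and $|\theta_k|^{1/2}$ is well defined and positive. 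Beyond that, the argument is a routine assembly of standard facts, and no spectral or genericity hypotheses on $\Theta$ are needed.
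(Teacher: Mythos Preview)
Your proposal is correct and follows essentially the same route as the paper: orthogonal reduction of $\Theta$ to a block-diagonal form of $2\times 2$ skew blocks, followed by a diagonal scaling and a permutation to match the Kronecker layout of $J_{2n}$. The only cosmetic difference is that the paper invokes the Murnaghan canonical form, which delivers the blocks with positive $\delta_i$ from the outset (so no sign-fixing swap is needed), whereas you allow signed $\theta_k$ and then correct the signs by conjugating with ${\scriptsize\begin{bmatrix}0&1\\1&0\end{bmatrix}}$; this is a minor bookkeeping variation, not a different idea.
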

\begin{proof}
	As a consequence of the spectral decomposition, in the Murnaghan canonical form (see \cite{benner00} and references therein), there exists a factorization $\Theta = O \Delta O^\rT$, where the matrix $O \in \mR^{2n \times 2n}$ is orthogonal and the matrix $\Delta \in \mR^{2n \time 2n}$ is block diagonal. Each block on the main diagonal of the matrix $\Delta$ has the form $\scriptsize \begin{bmatrix} 0 & \delta_i\\ -\delta_i & 0 \end{bmatrix}$ with $\delta_i >0$, where $\pm i \delta_i$ is a pair of complex conjugate eigenvalues of $\Theta$. Then, there exists a decomposition $\Theta=\Sigma J_{2n} \Sigma^\rT$, where the matrix $\Sigma = O {\rm diag}\{ \sqrt{\delta_1}, \sqrt{\delta_1}, \hdots, \sqrt{\delta_n}, \sqrt{\delta_n} \} \Sigma_0$ is non-singular and $\Sigma_0$ is a permutation:
$
\Sigma_0 J_{2n} \Sigma_0^\rT
=
I_{n} \ox
{\scriptsize\begin{bmatrix}
     0 & 1\\
    -1 & 0
\end{bmatrix}}.
$
Also, for any such $\Sigma$, the matrix $\Sigma \wh{\Sigma}^\rT$ leads to the decomposition of $\Theta$, where $\wh{\Sigma} \in Sp(2n,\mR)$.
\end{proof}\noindent 
In view of Lemma~\ref{lem:ch_fact}, any two non-singular matrices $\Theta_1, \Theta_2 \in \mA_{2n}$  are related to each other by a non-singular matrix $\wh{\Sigma}$ as $\Theta_1=\wh{\Sigma} \Theta_2 \wh{\Sigma}^\rT$, where $\wh{\Sigma}= \Sigma_1 \Sigma_2^{-1}$ and $\Theta_k=\Sigma_k J_{2n} \Sigma_k^\rT$ for $k=1,2$.

\bibliographystyle{IEEEtran}
\bibliography{IEEEabrv,Biblist}
\end{document}